\documentclass[letterpaper, 10pt, conference]{ieeeconf}  %

\IEEEoverridecommandlockouts                              %
\overrideIEEEmargins

\usepackage{amsmath} %
\usepackage{amssymb}  %

\usepackage{amsthm}
\usepackage{algorithm}
\usepackage[noend]{algpseudocode}
\usepackage{caption}
\usepackage{subcaption}

\usepackage{enumitem}

\usepackage{tikz}
\usepackage{pgfplots}
\usetikzlibrary{external}
\tikzexternalize[prefix=tikzcache/]

\newcommand{%
  \tikzsetnextfilename{}%
  \input{}%
}[1]{%
  \tikzsetnextfilename{#1}%
  \input{#1}%
}

\DeclareMathOperator{\tr}{tr}

\newtheorem{theorem}{Theorem}

\newtheorem{lemma}[theorem]{Lemma}
\newtheorem{corollary}[theorem]{Corollary}
\theoremstyle{definition}

\newtheorem{example}[theorem]{Example}

\usetikzlibrary{intersections,fillbetween,backgrounds}

\title{\LARGE \bf Ensuring the Safety of Uncertified Linear State-Feedback Controllers via Switching}

\author{Yiwen Lu and Yilin Mo%
\thanks{The authors are with the Department of Automation and BNRist, Tsinghua University, Beijing, P.R.China. Emails: {\tt\small luyw20@mails.tsinghua.edu.cn, ylmo@tsinghua.edu.cn}.}
\thanks{This work is supported by the National Key Research and Development Program of China under Grant 2018AAA0101601.}
}

\begin{document}

\maketitle
\thispagestyle{empty}
\pagestyle{empty}

\begin{abstract}
   Sustained research efforts have been devoted to learning optimal controllers for linear stochastic dynamical systems with unknown parameters, but due to the corruption of noise, learned controllers are usually uncertified in the sense that they may destabilize the system.
   To address this potential instability, we propose a ``plug-and-play'' modification to the uncertified controller which falls back to a known stabilizing controller when the norm of the state exceeds a certain threshold.
   We show that the switching strategy enhances the safety of the uncertified controller by making the linear-quadratic cost bounded even if the underlying linear feedback gain is destabilizing.
   We also prove the near-optimality of the proposed strategy by quantifying the maximum performance loss caused by switching as asymptotically negligible. Finally, we demonstrate the effectiveness of the proposed switching strategy using simulation on an industrial process example. 
\end{abstract}

\section{INTRODUCTION}

Learning a controller from noisy data for a system with unknown dynamics has been a central topic to adaptive control and reinforcement learning~\cite{aastrom2013adaptive,bertsekas2019reinforcement,recht2019tour,de2021low}.
A main challenge to applying learned controllers is that they are usually \emph{uncertified}, since it can be very difficult to guarantee the stability of such controllers without exact knowledge of the plant. To address this challenge, an alternative to certifying a learned controller is to enhance it with an additional layer of safeguard which prevents the closed-loop system from being destabilized. In particular, assuming the existence of a known stabilizing controller, empirically the safeguard may be achieved by treating the uncertified controller as the primary controller, the stabilizing controller as the fallback, and switching to the fallback when potential unsafety is detected.

Motivated by the above intuition, this paper focuses on the formal design and analysis of a switching strategy with safety guarantee, in the discrete-time Linear-Quadratic Regulation (LQR) setting with Gaussian process noise.
We assume the existence of a known stabilizing linear feedback control law $u = K_0x$, which can be achieved either when the system is known to be open-loop stable (in which case $K_0 = 0$), or through adaptive stabilization methods~\cite{byrnes1984adaptive,faradonbeh2018finite}.
Given an uncertified linear feedback control gain $K_1$,
a modification to the control law $u = K_1 x$ is proposed:
the controller normally applies $u = K_1 x$, but falls back to $u = K_0 x$ for $t$ consecutive steps once $\| x \|$ exceeds a threshold $M$. The proposed strategy is analyzed from both stability and optimality aspects.
In particular, the main results include:
\begin{enumerate}
    \item We prove the LQ cost of the proposed controller is always \emph{bounded}, even if $K_1$ is destabilizing. This fact implies that the proposed strategy enhances the safety of the uncertified controller by preventing the system from being catastrophically destabilized.
    \item Provided $K_1$ is stabilizing, and $M,t$ are chosen properly, we compare the LQ cost of the proposed strategy with that of the linear feedback control law $u = K_1 x$, and quantify the maximum increase in LQ cost caused by switching w.r.t. the strategy hyper-parameters $M, t$ as merely $O(t^{1/4}\exp(-\mathrm{constant}\cdot M^2))$, which decays \emph{super-exponentially} as the switching threshold $M$ tends to infinity.
    \item Inspired by the above quantification, we schedule both $M$ and $t$ to grow logarithmically when the uncertified gain $K_1$ is time-varying and converges to the optimal gain $K^*$ (as is often the case in learning-based schemes), and show that the increase in LQ cost caused by switching converges to zero at a \emph{super-polynomial} rate.
\end{enumerate}
The performance of the proposed switching scheme along with the hyper-parameter schedule is characterized theoretically as well as validated by simulation on an industrial process example. We envision that the switching framework could be potentially applicable in a wider range of learning-based control settings, since it may combine the good empirical performance of learned policies and the stability guarantees of classical controllers, and the ``plug-and-play'' nature of the switching logic may minimize the required modifications to existing learning schemes.

\subsection*{Related Works}

Our proposed strategy belongs to switching control methods.
However, the present paper focuses on applying switching to linear plants, which exhibits desirable properties that may address current challenges in adaptive LQR.
To our knowledge, such application of nonlinear controllers to linear plants has not been widely studied. In the following, works in related areas are reviewed respectively.

\paragraph*{Switched control systems}
Supervisory algorithms have been developed to stabilize switched linear systems~\cite{cheng2005stabilization,sun2005analysis,zhang2010asynchronously}, and other nonlinear systems that are difficult to stabilize globally with a single controller~\cite{prieur2001uniting,el2005output,battistelli2012supervisory}. The (near-)optimality of the controllers, however, are less discussed in the aforementioned works.
Building upon this vein of literature, the idea of switching between certified and uncertified controllers to improve performance was proposed in~\cite{wintz22global}, whose scheme guarantees global stability for general nonlinear systems under mild assumptions, but as a cost of this generality, no quantitative analysis of the performance under switching is provided. From an orthogonal perspective, the present paper specializes in linear systems and proves that switching may induce only negligible performance loss while ensuring safety. 

\paragraph*{Adaptive LQR} Adaptive and learned LQR has drawn significant research attention in recent years, for which high-probability estimation error and regret bounds have been proved for methods including optimism-in-face-of-uncertainty~\cite{abbasi2011regret,cohen2019learning}, thompson sampling~\cite{abeille2018improved}, policy gradient~\cite{fazel2018global}, robust control based on coarse identification~\cite{dean2018regret} and certainty equivalence~\cite{mania2019certainty,faradonbeh2020optimism,faradonbeh2020adaptive,simchowitz2020naive}.
All the above approaches, however, involve applying a linear controller learned from finite noise-corrupted data, which have a nonzero probability of being destabilizing, and may account for the failure probability that occur in the bounds in the aforementioned works. In~\cite{wang2021exact}, a ``cutoff'' method similar to the switching strategy described in the present paper is applied in an attempt to establish almost sure guarantees for adaptive LQR, which are nevertheless asymptotic in nature, and the extra cost caused by switching at a finite time step is not analyzed.
By contrast, the present paper provides a self-contained analysis of the switching strategy, which includes both non-asymptotic and asymptotic results, and is agnostic of learning schemes.

\paragraph*{Nonlinear controller for LQR} Nonlinearity in the control of linear systems has been studied mainly due to practical concerns such as saturating actuators. The performance of LQR under saturation nonlinearity has been studied in~\cite{gokcek2000slqr,gokcek2001lqr,ossareh2016lqr}, which are all based on stochastic linearization, a heuristics that replaces nonlinearity with approximately equivalent gain and bias. By contrast, the present paper treats nonlinearity as a design choice rather than a physical constraint, and provides rigorous performance bounds without resorting to any heuristics. 

\subsection*{Outline}

The remainder of this paper is organized as follows: Section~\ref{sec:problem} introduces the problem setting, describes the proposed switching strategy, and motivates the strategy with a toy example. Section~\ref{sec:theory} contains the analysis of the proposed strategy and the main theoretical results. Section~\ref{sec:simulation} validates the performance of the proposed strategy with a full-size industrial process example. Section~\ref{sec:conclusion} gives concluding remarks.

\subsection*{Notations}

The set of nonnegative integers are denoted by $\mathbb{N}$, and the set of positive integers are denoted by $\mathbb{N}^*$. For a square matrix $M$, $\rho(M)$ denotes the spectral radius of $M$. For a real symmetric matrix $M$, $M\succ 0$ denotes that $M$ is positive definite.
$\|v\|$ denotes the 2-norm of a vector $v$ and $\|M\|$ is the induced 2-norm of the matrix $M$, i.e., its largest singular value. For $P\succ 0$, $\| v \|_P=\|P^{1/2}v\|$ is the $P$-norm of a vector $v$. For two positive semidefinite matrices $P\succ 0, Q\succ 0$, $\|Q \|_P = \| P^{-1/2}QP^{-1/2}\| = \sup_{\| v\|_P = 1} \| v \|_Q^2$. For a random vector $X$, $X\sim N(\mu,\Sigma)$ denotes $X$ is Gaussian distributed with mean $\mu$ and covariance $\Sigma$. $\mathbb{P}(\cdot)$ denotes the probability operator, $\mathbb{E}(\cdot)$ denotes the expectation operator, and $\mathbf{1}_{A}$ is the indicator function of the random event $A$. For functions $f(x), g(x)$ with non-negative values, $f(x) = O(g(x))$ means $\limsup_{x\to\infty} f(x) / g(x) < \infty$, and $f(x) = \Theta(g(x))$ means $f(x) = O(g(x))$ and $g(x)=O(f(x))$.

\section{PROBLEM FORMULATION AND PROPOSED SWITCHING STRATEGY}\label{sec:problem}

Consider the following discrete-time linear plant:
\begin{equation}
    x_{k+1} = Ax_k + Bu_k + w_k,
    \label{eq:dynamics}
\end{equation}
where $k\in \mathbb{N}$ is the time index, $x_k \in \mathbb{R}^n$ is the state vector, $u_k \in \mathbb{R}^m$ is the input vector, and $w_k \in \mathbb{R}^n$ is the process noise. We assume that $x_0 = 0$, and that $w_k \sim N(0,W)$ i.i.d., where $W\succ 0$. We also assume w.l.o.g. that $(A,B)$ is controllable.

We measure the performance of a controller $u_k = u(x_k)$ in terms of the infinite-horizon quadratic cost defined as:
\begin{equation}
    J=\limsup _{T \rightarrow \infty} \frac{1}{T} \mathbb{E}\left[\sum_{k=0}^{T-1} x_{k}^{T} Q x_{k}+u_{k}^{T} R u_{k}\right],
    \label{eq:J}
\end{equation}
where $Q \succ 0, R \succ 0$ are fixed weight matrices specified by the system operator. It is well known that the optimal controller is the linear state feedback $u(x) = K^*x$, whose cost is $J^* = \tr(WP^*)$, where $P^* \succ 0$ satisfies the discrete-time algebraic Riccati equation
\begin{equation}
    P^{*}=Q+A^{T} P^{*} A-A^T P^{*} B\left(R+B^T P^{*} B\right)^{-1} B^{T} P^{*} A,
    \label{eq:dare}
\end{equation}
and the feedback gain can be determined as
\begin{equation}
    K^{*}=-\left(R+B^{T} P^* B\right)^{-1} B^{T} P^{*} A.
    \label{eq:K_opt}
\end{equation}

Now assume that the system and input matrices $A,B$ are unknown to the system operator, then she cannot determine the optimal feedback gain $K^*$ from~\eqref{eq:dare} and~\eqref{eq:K_opt}. Instead, we assume that she has the following two feedback gains at hand:
\begin{itemize}
    \item \emph{Primary gain $K_1$}, typically learned from data, which can be close to $K^*$ but does not have stability guarantees;
    \item \emph{Fallback gain $K_0$}, which is typically conservative but always guaranteed to be stabilizing, i.e., $\rho(A+BK_0) < 1$.
\end{itemize}

The high-level goal of the switching strategy is to choose between primary and fallback controllers in pursuit of both safety and performance. In particular, a block diagram of the closed-loop system under the proposed switching strategy is shown in Fig.~\ref{fig:block_diagram}, and the algorithmic description of the switching logic is shown in Algorithm~\ref{alg:main}. In plain words, the proposed switched control strategy is normally applying $u = K_1 x$, while falling back to $u = K_0 x$ for $t$ consecutive steps once $\| x \|$ exceeds a threshold $M$.
Notice that the switching may be caused by either a destabilizing $K_1$ or a large process noise $w_k$, and the probability of the latter event is controlled by $M$. For the rest of this section, we illustrate the proposed switching mechanism, and motivate the design choices, especially the introduction of the dwell time $t$, by means of a toy example.

\begin{figure}[!htbp]
    \centering
    \includegraphics{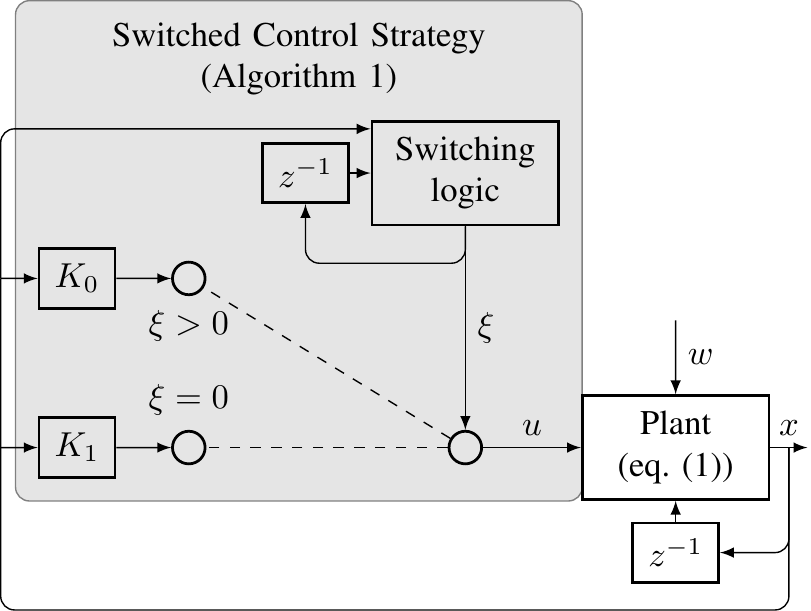}
    \caption{Block diagram of the closed-loop system under the proposed switching strategy. The controller selects $u = K_1 x$ when $\xi = 0$ and $u = K_0 x$ when $\xi > 0$, where $\xi$ is an internal counter determined by the switching logic.}
    \label{fig:block_diagram}
\end{figure}

\renewcommand{\algorithmicrequire}{\textbf{Input:}}
\renewcommand{\algorithmicensure}{\textbf{Output:}}
\begin{algorithm}[!htbp]
    \begin{algorithmic}[1]
        \Require Current state $x$, primary gain $K_1$, fallback gain $K_0$, current counter value $\xi$, switching threshold $M$, dwell time $t$ 
        \Ensure Control input $u$, next counter value $\xi'$
        \If{$\xi > 0$}
            \State $u \gets K_0 x$
        \Else
            \If{$ \| x \|  \geq M$}
                \State $\xi \gets t, u \gets K_0 x$
            \Else
                \State $u \gets K_1 x$
            \EndIf
        \EndIf
        \State $\xi' \gets \max\{\xi - 1, 0 \}$
    \end{algorithmic}
    \caption{Proposed switched control strategy}
    \label{alg:main}
\end{algorithm}

\begin{example}
    Consider a linear system in the form~\eqref{eq:dynamics} with $x \in \mathbb{R}^2, u \in \mathbb{R}$, whose dynamics are described by the matrices
    \begin{equation*}
        A = \begin{bmatrix}
            0.8 & 1 \\ 0 & 0.8
        \end{bmatrix},
        B = \begin{bmatrix}
            0 \\ 1
        \end{bmatrix},
    \end{equation*}
    and the covariance matrix of whose process noise is $W = I$. We choose the LQR weight matrices to be $Q = I, R = 10^{-4}$. We compare the state trajectories under the following learning-based controllers:

    \begin{itemize}
        \item Certainty equivalent controller with no switch, similar to the one in~\cite{simchowitz2020naive,wang2021exact}: $u_k = \hat{K}_k x_k + k^{-1/4}\zeta_k$, where $\hat{K}_k$ is the gain computed from~\eqref{eq:dare},~\eqref{eq:K_opt} by treating $\hat{A}_k, \hat{B}_k$ estimated from past data by least squares as the truth and updated logarithmically often at steps $2^i(i\in \mathbb{N})$, and $k^{-1/4}\zeta_k$ is a random exploration term with $\zeta_k \sim N(0,1)$.
        \item Switched certainty equivalent controllers: the above described certainty equivalent controller modified by the proposed switching strategy, where $\hat{K}_k$ is applied in place of the primary gain $K_1$, and the feedback gain is chosen as $K_0 = 0$ since the system is open-loop stable. The hyper-parameters are chosen to be $M = 10$, and $t = 1, t = 30$ respectively for comparison.
    \end{itemize}

    For a fair comparison between the controllers, the same exploration, parameter learning, and feedback gain updating schemes are applied, and the same realization of process and probing noises $\{w_k \}, \{\zeta_k \}$ are used for benchmarking different controllers.

    One representative trial is visualized in Fig.~\ref{fig:toy_example}, where dashed curve segments stand for the periods of applying the fallback gain.
    It can be observed that the certainty equivalent controller with no switch causes the state to explode exponentially, which can be explained by its application of a destabilizing feedback gain $\hat{K}_k$ ``wrongly'' learned from noisy data. By comparison, the switched controllers trigger the fallback gain $K_0$ (indicated by thick dots in the plot) when the state norm reaches the threshold $M$, and hence keep the state from exploding while waiting for the learned gain $\hat{K}_k$ to improve. The role of the dwell time $t$ is also illustrated: when $t = 1$, i.e., the strategy switches back to the primary gain immediately after the state norm drop below the threshold $M$, oscillation around the threshold may occur. By contrast, if the fallback gain is applied for at least $t$ steps per trigger for a sufficiently large $t$ ($t = 30$ in this example), such oscillation may be avoided. The necessity of the dwell time $t$ is further shown in Subsection~\ref{sec:gap}.

    \begin{figure}[!htbp]
        \centering
        \includegraphics{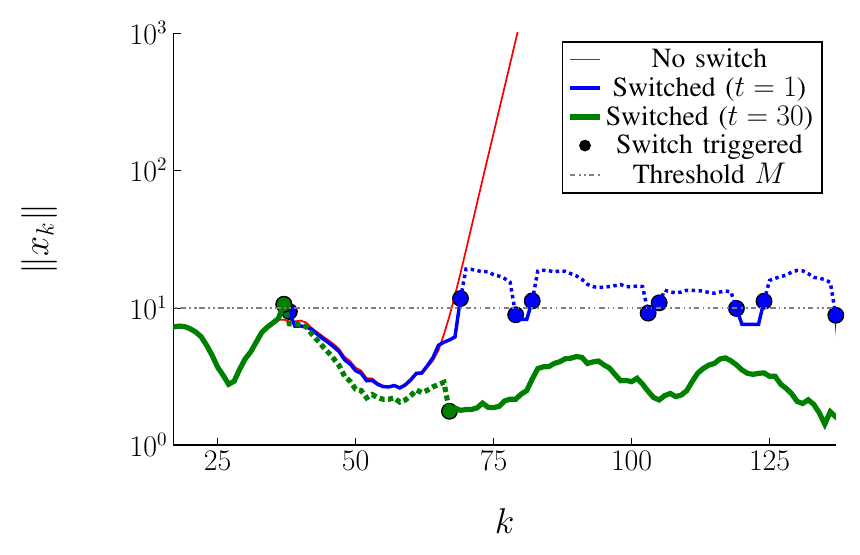}
        \caption{Comparison of state norm trajectories. Dashed curve segments indicate the periods of applying the fallback gain. In the presented trial, the controller with no switch destabilizes the system, the switched controller with $t = 1$ causes the state to oscillate around the switching threshold, and the switched controller with $t = 30$ succeeds to stabilize the system after switching once to the fallback gain.}
        \label{fig:toy_example}
    \end{figure}

    \label{ex:toy_example}
\end{example}

\section{MAIN THEORETICAL RESULTS}\label{sec:theory}

This section is devoted to proving mathematical properties of the proposed switching strategy. Proofs of intermediate results are omitted from the main text to avoid interrupting the flow of the argument.

\subsection{Cost of the controller is bounded with any $K_1$}

In this subsection, we prove that when $M$ and $t$ are fixed, the LQ cost associated with the proposed switched controller is always bounded, regardless of the choice of the underlying primary gain $K_1$ and hyper-parameters $M, t$. This contrasts the fact that the cost associated with the linear controller $u = K_1 x$ is infinity when $K_1$ is destabilizing, and implies that switching enhances the safety of the closed-loop system.

Since the fallback gain $K_0$ is stabilizing, there exists $P_0 \succ 0$ and $0<\rho_0 < 1$ such that
\begin{equation}
    (A+BK_0)^T P_0(A+BK_0) < \rho_0P_0.
    \label{eq:P0}
\end{equation}
The following lemma constructs a quadratic Lyapunov function from $P_0$ and states that the Lyapunov function has bounded expectation:
\begin{lemma}
    Let $V_{0,k} = x_k^T P_0 x_k$, then it holds for any $k$ that
    \begin{equation*}
        \mathbb{E} V_{0,k}<\frac{M^{2} \mathcal{A}^{2} \| P_0 \| + \tr(WP_0)}{1-\rho_0},
    \end{equation*}
    where $P_0 \succ 0$ and $0 < \rho < 1$ are constants that satisfy~\eqref{eq:P0}, and $\mathcal{A} = \max\{ \| A + BK_0 \|, \| A + BK_1 \|\}$.
    \label{lemma:EV}
\end{lemma}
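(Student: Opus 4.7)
The plan is to establish a uniform one-step drift inequality for the Lyapunov function $V_{0,k} = x_k^T P_0 x_k$ that holds on every sample path regardless of which gain the switching logic selects at step $k$, then iterate it from the initial condition $x_0 = 0$. The key structural observation driving the whole argument is that, by direct inspection of Algorithm~\ref{alg:main}, the primary gain $K_1$ is applied only when $\xi_k = 0$ \emph{and} $\|x_k\| < M$; in every other case the fallback gain $K_0$ is used. Hence the possibly destabilising branch can only ``fire'' from a region of state space where $V_{0,k}$ is uniformly bounded by $\|P_0\| M^2$.

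With this in mind, I would split the conditional expectation $\mathbb{E}[V_{0,k+1}\mid x_k]$ into the two branches and bound each separately. In the $K_0$-branch, $x_{k+1} = (A+BK_0)x_k + w_k$ together with~\eqref{eq:P0} and the independence of $w_k$ from $x_k$ gives
\[
\mathbb{E}[V_{0,k+1}\mid x_k] = x_k^T(A+BK_0)^T P_0 (A+BK_0) x_k + \tr(WP_0) \le \rho_0 V_{0,k} + \tr(WP_0).
\]
In the $K_1$-branch, using $x_{k+1} = (A+BK_1)x_k + w_k$, the operator-norm bound $(A+BK_1)^T P_0(A+BK_1) \preceq \|A+BK_1\|^2 \|P_0\| I$, and $\|x_k\| < M$,
\[
\mathbb{E}[V_{0,k+1}\mid x_k] \le \|A+BK_1\|^2 \|P_0\| \|x_k\|^2 + \tr(WP_0) \le \mathcal{A}^2 \|P_0\| M^2 + \tr(WP_0).
\]

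Since the right-hand side of
\[
\mathbb{E}[V_{0,k+1}\mid x_k] \le \rho_0 V_{0,k} + \mathcal{A}^2 \|P_0\| M^2 + \tr(WP_0)
\]
dominates both branch-specific bounds (trivially in the $K_0$-branch, and in the $K_1$-branch because $\rho_0 V_{0,k} \ge 0$), this inequality holds pathwise. Taking total expectation and iterating from $\mathbb{E}V_{0,0} = 0$ gives
\[
\mathbb{E} V_{0,k} \le \bigl(M^2 \mathcal{A}^2 \|P_0\| + \tr(WP_0)\bigr)\sum_{j=0}^{k-1} \rho_0^j < \frac{M^2 \mathcal{A}^2 \|P_0\| + \tr(WP_0)}{1-\rho_0},
\]
the strict inequality coming from bounding the finite geometric sum by its infinite limit, which proves the claim.

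No step here looks genuinely hard; the only real content is recognising that the gating condition $\|x_k\| < M$ is exactly what converts the (possibly expanding) contribution of $(A+BK_1)$ into an additive constant forcing term, after which the stable drift induced by $K_0$ does all the work. The mildly delicate point is that the controller choice is determined by the history $\sigma(x_0,w_0,\ldots,w_{k-1})$ rather than by $w_k$, so the computation of the conditional expectation in each branch is legitimate; this is routine but worth stating once explicitly to justify the branch-wise case analysis.
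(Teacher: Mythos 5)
Your proof is correct, and although the paper omits the proof of Lemma~\ref{lemma:EV}, your argument is evidently the intended one: the bound $\bigl(M^{2}\mathcal{A}^{2}\|P_0\|+\tr(WP_0)\bigr)/(1-\rho_0)$ is precisely the geometric-series limit of the one-step drift inequality $\mathbb{E}[V_{0,k+1}\mid x_k]\le \rho_0 V_{0,k}+M^{2}\mathcal{A}^{2}\|P_0\|+\tr(WP_0)$ that you establish by splitting on the gain selected by Algorithm~\ref{alg:main}. The two load-bearing observations --- that $K_1$ can only be applied when $\|x_k\|<M$, so the uncertified branch contributes only a bounded additive term, and that the branch choice is measurable with respect to the history and hence independent of $w_k$ --- are both present and correctly used.
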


Based on the fact that the LQ cost can be upper bounded by the expectation of the quadratic Lyapunov function $V_{0,k}$ defined above scaled by a constant, we have the following theorem:

\begin{theorem}
    For the closed-loop system under the proposed switching strategy with hyper-parameters $M, t$, the LQ cost defined in~\eqref{eq:J} satisfies
    \begin{equation}
        J<\frac{(M^{2} \mathcal{A}^{2} \| P_0 \| + \tr(WP_0))\| Q_{01}\|_{P_0}}{1-\rho_0},
        \label{eq:J_unstable}
    \end{equation}
    where $\rho_0, P_0$ are constants that satisfy~\eqref{eq:P0}, $\mathcal{A} = \max\{ \| A + BK_0 \|, \| A + BK_1 \|\}$, and $Q_{01} = Q + K_0^T R K_0 + K_1^T R K_1$.
    \label{thm:bounded_cost}
\end{theorem}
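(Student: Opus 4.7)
The plan is to chain two easy inequalities: bound the instantaneous quadratic cost by a constant multiple of the Lyapunov function $V_{0,k}$, and then invoke Lemma~\ref{lemma:EV} to control $\mathbb{E} V_{0,k}$ uniformly in $k$.

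First I would observe that under the switching policy the input is always either $K_0 x_k$ or $K_1 x_k$, so
\begin{equation*}
   u_k^T R u_k \leq x_k^T K_0^T R K_0 x_k + x_k^T K_1^T R K_1 x_k,
\end{equation*}
and hence the per-step cost satisfies $x_k^T Q x_k + u_k^T R u_k \leq x_k^T Q_{01} x_k$ with $Q_{01}$ as defined in the theorem. Next I would use the definition $\|Q_{01}\|_{P_0} = \sup_{\|v\|_{P_0}=1}\|v\|_{Q_{01}}^2$ supplied in the notation section, which immediately yields the pointwise comparison $x_k^T Q_{01} x_k \leq \|Q_{01}\|_{P_0}\, x_k^T P_0 x_k = \|Q_{01}\|_{P_0}\, V_{0,k}$.

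Combining these, Lemma~\ref{lemma:EV} gives for every $k$
\begin{equation*}
   \mathbb{E}\bigl[x_k^T Q x_k + u_k^T R u_k\bigr] \leq \|Q_{01}\|_{P_0}\, \mathbb{E} V_{0,k} < \frac{\|Q_{01}\|_{P_0}\bigl(M^2 \mathcal{A}^2 \|P_0\| + \tr(W P_0)\bigr)}{1-\rho_0}.
\end{equation*}
Since this uniform bound is independent of $k$, averaging over $k=0,\ldots,T-1$ and taking $\limsup_{T\to\infty}$ in the definition~\eqref{eq:J} of $J$ preserves it, which is exactly~\eqref{eq:J_unstable}.

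I do not anticipate any real obstacle here: all the work has been done in Lemma~\ref{lemma:EV}, which already absorbs the effect of both possible feedback gains through the constant $\mathcal{A}$, and the remaining step is just the algebraic translation between the cost matrices $Q, R$ and the Lyapunov matrix $P_0$ via the $P_0$-norm. The one place to be slightly careful is the bound on $u_k^T R u_k$: one should not try to replace it by $\max_i x_k^T K_i^T R K_i x_k$ without a uniform matrix comparison, which is why summing the two quadratic forms (and thereby getting $Q_{01}$ with both $K_0^T R K_0$ and $K_1^T R K_1$ present) is the clean way to proceed.
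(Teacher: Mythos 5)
Your proposal is correct and follows essentially the same route as the paper's own proof: bound the per-step cost by $x_k^T Q_{01} x_k$ using the fact that $u_k \in \{K_0 x_k, K_1 x_k\}$, convert to the Lyapunov function via the $P_0$-norm, apply Lemma~\ref{lemma:EV}, and pass to the Ces\`aro limit. The only cosmetic difference is that you spell out the intermediate $\|Q_{01}\|_{P_0}$ step that the paper leaves implicit.
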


\begin{proof}
    According to the switching strategy, we have $u_k \in \{ K_1 x_k, K_0 x_k\}$, which implies
        $x_k^T Q x_k + u_k^T R u_k \leq x_k^T Q_{01} x_k$
    for any $k$, and then by Lemma~\ref{lemma:EV}, $\mathbb{E}(x_k^T Q x_k + u_k^T R u_k)$ is no greater than the RHS of~\eqref{eq:J_unstable} for any $k$.
    The conclusion then follows from the fact
    $J = \limsup _{T \rightarrow \infty} \frac{1}{T} \left.\sum_{k=0}^{T-1} \mathbb{E}( x_{k}^T Q x_{k}+u_{k}^T R u_{k})\right.$.
\end{proof}

\subsection{Upper bound on performance loss caused by switching}
\label{sec:gap}

In this subsection, we quantify the extra LQ cost caused by the conservativeness of switching when $K_1$ is stabilizing. Let $J^{K_1}$ denote the LQ cost associated with the closed-loop system under the linear controller $u = K_1 x$, and $J^{K_1,M,t}$ denote the LQ cost associated with the closed-loop system under our proposed switched controller with primary gain $K_1$ and hyper-parameters $M,t$.
To quantify the behavior of the system under switching, we resort to a common quadratic Lyapunov function for $A+BK_1$ and $(A+BK_0)^t$, which always exists for sufficiently large dwell time $t$. Formally speaking, the following inequalities holds:
\begin{equation}
    \begin{cases}
        (A+BK_1)^T P (A+BK_1) < \rho P,\\
        ((A+BK_0)^t)^T P (A+BK_0)^t < \rho P.
    \end{cases}
    \label{eq:common_lyap}
\end{equation}
Notice that $\rho, P$ that satisfy the first inequality always exist due to the stability of $A+BK_1$, and given $\rho, P$, the quantity $t$ that satisfy the second inequality exists since $\lim_{t\to\infty}(A+BK_0)^t = 0$ by the stability of $A+BK_0$.

The next theorem states moment and tail properties of the closed-loop system under the proposed switching strategy:
\begin{theorem}
    Assume that $\rho_0, P_0$ satisfy~\eqref{eq:P0}, and that $\rho, P, t$ satisfy~\eqref{eq:common_lyap}. Let $\tilde{W} = \sum_{\tau=0}^\infty (A+BK_0)^\tau W((A+BK_0)^\tau)^T$. If the threshold $M \geq M_0 := \sqrt{3 \|\tilde{W} \| \|P \|\|P^{-1} \|} / (1 - \rho^{1/4})$ is large enough, then the following statements hold:
\begin{enumerate}
	\item The fourth moments of $x_k$ is bounded:
    \begin{equation}
        \mathbb{E} \| x_k \|^4_{P_0} \leq 8(\mathcal{Q} \| P_0 \|_P^2+(n^2 + 2n)\| P_0 \|_{\tilde{W}^{-1}}^2),
    \end{equation}
    where 
    \begin{equation*}
        \mathcal{Q}=\frac{6 \rho\left(\operatorname{tr}(\tilde{W} P)^{2}\right)+(1-\rho)\left(n^{2}+2 n\right)\|P\|_{\tilde{W}^{-1}}^{2}}{(1-\rho)\left(1-\rho^{2}\right)},
    \end{equation*}
    \item The probability of not using feedback gain $K_1$ satisfies:
    \begin{equation*}
        \mathbb{P}(u_k \neq K_1 x_k) \leq t \mathcal{E}(M),
    \end{equation*}
    where
    \begin{equation*}
        \mathcal{E}(M) = \frac{2^{n / 2+1}}{\rho^{-1 / 2}-1} \exp \left(-\frac{\left(1-\rho^{1 / 4}\right)^{2} M^{2}}{4 \| \tilde{W} \| \| P \| \| P^{-1} \| }\right),
    \end{equation*}
    which decays {\bf super-exponentially} w.r.t. the threshold $M$.
\end{enumerate}
    \label{thm:moment_prob}
\end{theorem}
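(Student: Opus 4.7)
The plan is to analyze the switched trajectory through the common Lyapunov function $P$ from \eqref{eq:common_lyap}, by partitioning the time axis into \emph{macro steps}: each application of $u = K_1 x$ is a macro step of length 1, and each $t$-consecutive-step $K_0$ block triggered by the switching logic is a macro step of length $t$. With $V_k := \|x_k\|_P^2$ and $\tau_s$ denoting the $s$-th macro-step boundary, applying \eqref{eq:common_lyap} together with Young's inequality yields the key pathwise recurrence
\begin{equation*}
V_{\tau_{s+1}} \leq \rho^{1/2} V_{\tau_s} + \tfrac{1}{1-\rho^{1/2}} Z_s,
\end{equation*}
where $Z_s$ is a quadratic form in the Gaussian noise accumulated during macro step $s$. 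Crucially, regardless of the macro-step type, the effective noise entering $Z_s$ has covariance $\preceq \tilde{W}$ (since $W \preceq \tilde{W}$ for a single $K_1$ step and $\sum_{j=0}^{t-1}(A+BK_0)^j W((A+BK_0)^j)^T \preceq \tilde{W}$ for a $K_0$ block), so by monotonicity of $\det$ on positive matrices the conditional MGF $\mathbb{E}[\exp(\mu Z_s) \mid x_{\tau_s}]$ is deterministically bounded by $(1-2\mu\|P\|_{\tilde{W}^{-1}})^{-n/2}$ for admissible $\mu$.

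For Part~2, by Algorithm~\ref{alg:main} the event $\{u_k \neq K_1 x_k\}$ forces $\|x_{k'}\| \geq M$ for some $k' \in \{k-t+1,\ldots,k\}$, so a union bound gives $\mathbb{P}(u_k \neq K_1 x_k) \leq t\,\sup_{k'}\mathbb{P}(\|x_{k'}\| \geq M)$. Next I would reduce to $\mathbb{P}(V_{k'} \geq M^2/\|P^{-1}\|)$ and apply Chernoff. To control $\mathbb{E}[\exp(\lambda V_{\tau_N})]$ uniformly in $N$, I would condition on the trajectory up to $\tau_s$ and combine the conditional MGF bound above with Jensen's inequality $\mathbb{E}[\exp(\lambda \rho^{1/2} V_{\tau_s})] \leq \mathbb{E}[\exp(\lambda V_{\tau_s})]^{\rho^{1/2}}$, yielding a fixed-point inequality $a_{s+1} \leq \rho^{1/2} a_s + \log g(\lambda)$ for $a_s := \log \mathbb{E}[\exp(\lambda V_{\tau_s})]$; since $V_0 = 0$, this bounds $a_N \leq \log g(\lambda)/(1-\rho^{1/2})$. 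Choosing $\lambda$ just inside the boundary of the MGF's domain of finiteness (where $g(\lambda) \leq 2^{n/2}$) and Chernoff-ing delivers the claimed super-exponential tail; the hypothesis $M \geq M_0$ is precisely what makes this $\lambda$ produce the desired exponent of order $M^2/(\|\tilde{W}\|\|P\|\|P^{-1}\|)$. Times interior to a $K_0$ block are handled by writing $x_{\tau_s+j} = (A+BK_0)^j x_{\tau_s} + (\text{block noise})$ and applying the same tools to each summand.

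For Part~1, iterating the recurrence gives
\begin{equation*}
V_{\tau_N} \leq \tfrac{1}{1-\rho^{1/2}} \sum_{s=0}^{N-1} \rho^{(N-1-s)/2} Z_s.
\end{equation*}
Squaring, taking expectations, and using the nested conditional-independence structure (the noise of macro step $s'$ is independent of everything up to $\tau_{s'}$) produces
\begin{equation*}
\mathbb{E} V_{\tau_N}^2 \leq \tfrac{1}{(1-\rho^{1/2})^2}\Bigl(\sum_s \rho^{N-1-s}\,\mathbb{E} Z_s^2 + 2\sum_{s<s'}\rho^{(2N-2-s-s')/2}\,\mathbb{E}[Z_s Z_{s'}]\Bigr),
\end{equation*}
where standard Gaussian fourth-moment calculus bounds $\mathbb{E}[Z_s^2] \leq (n^2+2n)\|P\|_{\tilde{W}^{-1}}^2$, $\mathbb{E}[Z_s] \leq \tr(\tilde{W}P)$, and (via conditioning) $\mathbb{E}[Z_s Z_{s'}] \leq (\tr(\tilde{W}P))^2$ for $s < s'$. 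Summing the resulting geometric series produces the constant $\mathcal{Q}$. To pass from $\mathbb{E}\|x_{\tau_N}\|_P^4$ to $\mathbb{E}\|x_k\|_{P_0}^4$ at arbitrary $k$, I would decompose $x_k = (A+BK_0)^{k-\tau_s} x_{\tau_s} + \nu_k$ (with $\nu_k$ the block noise) and apply $\|a+b\|_{P_0}^4 \leq 8(\|a\|_{P_0}^4+\|b\|_{P_0}^4)$; the first summand is bounded via $\|P_0\|_P^2 \cdot \mathbb{E} V_{\tau_N}^2$ and the second via a direct Gaussian fourth-moment computation of $\|\nu_k\|_{P_0}^4$, producing the two terms $\mathcal{Q}\|P_0\|_P^2$ and $(n^2+2n)\|P_0\|_{\tilde{W}^{-1}}^2$ respectively.

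The main obstacle is that the macro-step types and durations are determined by the random trajectory itself, so $N$ and the $\tau_s$'s are random stopping times rather than deterministic indices. The key observation that unlocks the argument is that both the pathwise contraction rate $\rho^{1/2}$ and the conditional MGF/moment bounds on $Z_s$ hold \emph{uniformly} over the macro-step type, so one can upper-bound tails and moments without tracking the switching sequence. A secondary subtlety is calibrating the Chernoff parameter so as to simultaneously achieve super-exponential decay in $M$ and remain inside the MGF's domain of finiteness, which is exactly what the condition $M \geq M_0$ is designed to ensure.
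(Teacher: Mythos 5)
The paper itself omits the proof of Theorem~\ref{thm:moment_prob} (``proofs of intermediate results are omitted from the main text''), so there is no official argument to compare against line by line; judged on its own, your architecture is sound and, I believe, essentially the intended one. The key structural choices all check out: the macro-step decomposition with the common Lyapunov function $P$ from~\eqref{eq:common_lyap}; the observation that the effective per-macro-step noise covariance is dominated by $\tilde W$ in both branches, so that $\|P\|_{\Sigma^{-1}}\leq\|P\|_{\tilde W^{-1}}$ gives type-uniform MGF and moment bounds; the strong-Markov/stopping-time justification of conditional independence across macro steps; the restriction of the tail bound to macro-step boundaries (the norm test only fires when $\xi=0$); and, for Part~1, the use of the $P_0$-norm for block-interior times so that $\|(A+BK_0)^j x_{\tau_s}\|_{P_0}\leq\|x_{\tau_s}\|_{P_0}$ for \emph{all} $0\leq j<t$ (the $P$-Lyapunov inequality only controls the $t$-th power), which is exactly what produces the two terms $\mathcal{Q}\|P_0\|_P^2$ and $(n^2+2n)\|P_0\|_{\tilde W^{-1}}^2$ after $\|a+b\|^4\leq 8(\|a\|^4+\|b\|^4)$.

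The one substantive caveat is that your route does not reproduce the theorem's constants exactly. The stated constants strongly suggest the authors bound the tail by iterating the triangle inequality to get $\|x_{\tau_N}\|_P\leq\sum_s\rho^{(N-1-s)/2}\|\nu_s\|_P$ and then ``peeling'' with a union bound over thresholds $(1-\rho^{1/4})\rho^{j/4}$ (whence the $(1-\rho^{1/4})^2$ in the exponent and the geometric prefactor $1/(\rho^{-1/2}-1)$, with $M\geq M_0$ ensuring the resulting series is dominated by a geometric one), and likewise run a direct second-moment recursion for $\mathbb{E}V_{\tau_s}^2$ (whence the $(1-\rho)(1-\rho^2)$ denominator in $\mathcal{Q}$). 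Your MGF fixed point gives an exponent at least as good (since $1-\rho^{1/2}\geq(1-\rho^{1/4})^2$ and $\|P\|_{\tilde W^{-1}}\leq\|\tilde W\|\|P\|$) but a prefactor $2^{n/(2(1-\rho^{1/2}))}$, and your squared geometric sum gives a coefficient $(1-\rho^{1/2})^{-4}$ on $\tr(\tilde WP)^2$; both are strictly worse than the stated ones as $\rho\to1$. For Part~2 the strictly better exponent absorbs the worse prefactor once $M$ is large enough, consistent with the theorem's ``large enough'' hypothesis, but for Part~1 you genuinely prove the fourth-moment bound with a larger explicit constant in place of $\mathcal{Q}$. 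This is a bookkeeping discrepancy rather than a conceptual gap --- every downstream use of the theorem only needs \emph{some} explicit constant --- but you should state your own constants rather than claim the literal ones.
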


We are now ready to state the main theorem of this subsection:

\begin{theorem}
	With $\rho_0, P_0, \rho, P, M_0, \tilde{W}, \mathcal{Q}, \mathcal{E}$ defined the same as in Theorem~\ref{thm:moment_prob}, assuming that the dwell time $t$ satisfies \eqref{eq:common_lyap} and the threshold $M \geq M_0$, it holds that
    \begin{equation}
        J^{K_1,M,t} - J^{K_1} \leq 2 C_1C_2\mathcal{G}+(C_2^2+C_3) \mathcal{G}^2,
        \label{eq:gap}
    \end{equation}
    where
    \begin{align*}
        & \mathcal{G} = C_4 (t\mathcal{E}(M))^{1/4},  \quad C_{1}=\sqrt{\tr(WP) \| Q_1 \|_P/ (1 - \rho)}, \\
        & C_2 = \| \Delta_1 \| \| Q_1 \|_{P_0}\sum_{s=0}^\infty \| A_1^s \|_{Q_1}, \quad C_3 = \| \Delta_2 \| \| P_0^{-1} \|, \\
        & C_4 = 2^{3/4}(\mathcal{Q} \| P_0 \|_P^2+(n^2 + 2n)\| P_0 \|_{\tilde{W}^{-1}}^2)^{1/4}, \\
        & {Q}_1 = Q + K_1^TRK_1, A_1 = A+BK_1, \\ & \Delta_1 = B(K_0 - K_1), \quad \Delta_2 = K_0^T R K_0 - K_1^T R K_1.
    \end{align*}
    \label{thm:gap}
\end{theorem}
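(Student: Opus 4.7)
The plan is to couple the closed-loop systems under the two strategies and decompose the per-step cost difference into three pieces, each controlled by the moment and tail estimates of Theorem~\ref{thm:moment_prob}. I will let $y_k$ denote the state under the proposed switched controller and $z_k$ denote the state under the linear controller $u = K_1 x$, both driven by the \emph{same} noise realization $\{w_k\}$ from the common initial condition $y_0 = z_0 = 0$. Setting $e_k = y_k - z_k$, the recursion $e_{k+1} = A_1 e_k + \mathbf{1}_{u_k \neq K_1 y_k}\,\Delta_1 y_k$ with $e_0 = 0$ follows directly from $u_k \in \{K_0 y_k, K_1 y_k\}$, giving the explicit form $e_k = \sum_{s=0}^{k-1} A_1^{k-1-s}\,\mathbf{1}_{u_s \neq K_1 y_s}\,\Delta_1 y_s$. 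A case analysis on $u_k$ also yields the identity $y_k^T Q y_k + u_k^T R u_k = y_k^T Q_1 y_k + \mathbf{1}_{u_k \neq K_1 y_k}\,y_k^T \Delta_2 y_k$, so the per-step cost difference decomposes as
\begin{equation*}
    (y_k^T Q y_k + u_k^T R u_k) - z_k^T Q_1 z_k = 2\,z_k^T Q_1 e_k + e_k^T Q_1 e_k + \mathbf{1}_{u_k \neq K_1 y_k}\,y_k^T \Delta_2 y_k.
\end{equation*}
Since $J$ is a Ces\`aro limit, any bound on the expectation of each piece that is uniform in $k$ will immediately give~\eqref{eq:gap}, with the three pieces producing respectively the $2C_1C_2\mathcal{G}$, $C_2^2\mathcal{G}^2$ and $C_3\mathcal{G}^2$ contributions.

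The first auxiliary bound I need is $\mathbb{E}\|z_k\|_{Q_1}^2 \leq C_1^2$, which I would obtain by iterating the Lyapunov inequality $A_1^T P A_1 < \rho P$ to get $\mathbb{E}\|z_k\|_P^2 \leq \tr(WP)/(1-\rho)$ and then applying $v^T Q_1 v \leq \|Q_1\|_P\, v^T P v$. The main work is to establish $\sqrt{\mathbb{E}\|e_k\|_{Q_1}^2} \leq C_2 \mathcal{G}$: take the $Q_1$-norm of the sum representation of $e_k$ via the triangle inequality, apply the weighted Cauchy--Schwarz inequality $\bigl(\sum_s a_s b_s\bigr)^2 \leq \bigl(\sum_s a_s\bigr)\bigl(\sum_s a_s b_s^2\bigr)$ with $a_s = \|A_1^{k-1-s}\|_{Q_1}$, exchange sum and expectation, and close with a second Cauchy--Schwarz that separates the indicator from the state to produce factors $\sqrt{\mathbb{P}(u_s \neq K_1 y_s)}$ and $\sqrt{\mathbb{E}\|y_s\|_{P_0}^4}$; invoking $\mathbb{P}(u_s \neq K_1 y_s) \leq t\mathcal{E}(M)$ and $\mathbb{E}\|y_s\|_{P_0}^4 \leq C_4^4$ from Theorem~\ref{thm:moment_prob} then yields the $(t\mathcal{E}(M))^{1/4}$ scaling baked into $\mathcal{G}$. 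A shorter variant of the same argument, without the convolution against powers of $A_1$, handles $\mathbb{E}[\mathbf{1}_{u_k \neq K_1 y_k}\,y_k^T \Delta_2 y_k]$ and delivers $C_3\mathcal{G}^2$. The cross term is then controlled by Cauchy--Schwarz in $L^2$, $2\bigl|\mathbb{E}[z_k^T Q_1 e_k]\bigr| \leq 2\sqrt{\mathbb{E}\|z_k\|_{Q_1}^2}\sqrt{\mathbb{E}\|e_k\|_{Q_1}^2} \leq 2 C_1 C_2 \mathcal{G}$, finishing~\eqref{eq:gap}.

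The principal obstacle is that the switching indicator $\mathbf{1}_{u_s \neq K_1 y_s}$ is in general strongly correlated with $y_s$, so its probability cannot simply be factored out of an expectation involving $y_s$. The trick $\mathbf{1}^2 = \mathbf{1}$ combined with Cauchy--Schwarz decouples this correlation at the cost of a square root, which --- compounded with a second square root from the cross-term Cauchy--Schwarz --- accounts for the degradation of the raw probability bound $t\mathcal{E}(M)$ down to its fourth root inside $\mathcal{G}$. A secondary technical point is verifying uniformity in $k$ of all three bounds, which ultimately rests on the finiteness of the geometric tails $\sum_{r\geq 0}\|A_1^r\|_{Q_1}$ and $\sum_{r\geq 0}\|A_1^r\|_P$ guaranteed by the common Lyapunov stability~\eqref{eq:common_lyap}, together with the uniform fourth-moment bound on $y_s$ supplied by Theorem~\ref{thm:moment_prob}.
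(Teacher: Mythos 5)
Your proposal is correct and follows essentially the same route as the paper's proof: coupling the switched and linear closed loops through the same noise realization, writing the deviation as a convolution of $A_1^{k-1-s}\Delta_1 x_s\mathbf{1}_{F_s}$, and decoupling the switching indicator from the state via Cauchy--Schwarz using the fourth-moment and tail bounds of Theorem~\ref{thm:moment_prob}. The only differences are cosmetic --- you expand $\|y_k\|_{Q_1}^2-\|z_k\|_{Q_1}^2$ into an explicit cross term handled by $L^2$ Cauchy--Schwarz, where the paper applies the triangle inequality to $\|x_k\|_{Q_1}$ and then squares --- and both yield the same constants $C_1,\dots,C_4$ and the bound \eqref{eq:gap}.
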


\begin{proof}
    Let $\check{x}_0 = x_0$ and $\check x_{k+1}= A_1\check x_k + w_k$, then $J^{K_1} = \lim_{T \to \infty}\sum_{k = 0}^{T-1}\mathbb{E}\| \check{x}_k \|_{Q_1}^2 / T$. On the other hand, we have $J^{K_1,M,t} = \limsup_{T \to \infty}\sum_{k = 0}^{T-1}\mathbb{E}[x_k^T Q x_k + u_k^T R u_k] / T$, and therefore we only need to prove $\mathbb{E}[x_k^T Q x_k + u_k^T R u_k] - \mathbb{E}\| \check{x}_k \|_{Q_1}^2$ is no greater than the RHS of~\eqref{eq:gap} for any $k$. Notice that
    \begin{equation*}
        x_k^T Q x_k + u_k^T R u_k - \|\check{x}_k \|_{Q_1}^2 = \| x_k \|_{Q_1}^2  - \|\check{x}_k \|_{Q_1}^2 + x_k^T \Delta_2 x_k \mathbf{1}_{F_k},
    \end{equation*}
    where $F_{k} = \{ u_k \neq K_1 x_k \}$ denotes the event that the fallback mode is active and the gain $K_1$ is not applied at step $k$. We will next bound $\mathbb{E}\| x_k \|_{Q_1}^2  - \mathbb{E}\|\check{x}_k \|_{Q_1}^2$ and $\mathbb{E}(x_k^T \Delta_2 x_k \mathbf{1}_{F_k})$ respectively.

    \paragraph{Bounding $\mathbb{E}\| x_k \|_{Q_1}^2  - \mathbb{E}\|\check{x}_k \|_{Q_1}^2$} \label{par:bound_norm_diff} Notice that 
    \begin{equation*}
    	x_k = A_1x_{k-1}+w_{k-1}+ \Delta_1 x_{k-1}\mathbf{1}_{F_{k-1}},
    \end{equation*}
    and by recursively applying this expansion, we get
    \begin{align*}
	    x_k & = A_1^{k} x_{0}+\sum_{s=0}^{k-1}A_1^{k-s-1} (w_{s} + \Delta_1 {x}_{s} \mathbf{1}_{F_{s}})\\
        & = \check x_k +\sum_{s=0}^{k-1}A_1^{k-s-1}  \Delta_1 {x}_{s} \mathbf{1}_{F_{s}}.
    \end{align*}
    Hence,
    \begin{equation*}
        \| x_k \|_{Q_1} \leq  \| \check{x}_k \|_{Q_1} + \| \Delta_1 \|_{Q_1} \sum_{s=0}^{k-1}\|A_1^{k-s-1}\|_{Q_1} \| {x}_{s} \|_{Q_1}\mathbf{1}_{F_{s}}.
    \end{equation*}
    From the fact that $\mathbb{E}(\sum_{i=1}^n X_i)^{2} \leq (\sum_{i=1}^n \sqrt{\mathbb{E}X_i^2})^2$ for any random variables $X_1,\ldots,X_n$, we have
    \begin{align*}
        & \mathbb{E}\| x_k \|_{Q_1}^2 \leq \left(\vphantom{\sum_1^2}\sqrt{\mathbb{E}\| \check{x}_k \|_{Q_1}^2}  + \right. \\&  \quad \left.\| \Delta_1 \|_{Q_1} \| Q_1 \|_{P_0} \sum_{s=0}^{k-1}\|A_1^{k-s-1}\| \sqrt{\mathbb{E}[ \| x_{s}\|_{P_0}^2 \mathbf{1}_{F_{s}} ]}\right)^2
    \end{align*}
    where by Cauchy-Schwarz inequality and Theorem~\ref{thm:moment_prob}, we have $\sqrt{\mathbb{E}\left[ \| x_{s}\|_{P_0}^2 \mathbf{1}_{F_{s}} \right]}  \leq (\mathbb{E}\| x_s \|_{P_0}^4)^{1/4}\mathbb{P}(F_s)^{1/4} \leq \mathcal{G}$ for any $s$. Furthermore, we have $\mathbb{E}\| \check{x}_k \|_{Q_1}^2 \leq C_1^2$ guaranteed by~\eqref{eq:common_lyap}. Hence, we have
    \begin{equation*}
        \mathbb{E}\| x_k \|_{Q_1}^2  - \mathbb{E}\|\check{x}_k \|_{Q_1}^2 \leq 2 C_1 C_2 \mathcal{G} + C_2^2 \mathcal{G}^2.
    \end{equation*}
    
    \paragraph{Bounding $\mathbb{E}(x_k^T \Delta_2 x_k \mathbf{1}_{F_k})$} We have $x_k^T \Delta_2 x_k \mathbf{1}_{F_k} \leq \|\Delta_2\|\| x_k\|^2\mathbf{1}_{F_k} \leq C_3 \| x_k\|_{P_0}^2\mathbf{1}_{F_k}$. Following a similar argument to part (a), we get $$\mathbb{E}(x_k^T \Delta_2 x_k \mathbf{1}_{F_k}) = C_3 \mathcal{G}^2.$$

    Combining the above two parts, we obtain the desired conclusion.
\end{proof}

The below corollary following directly from Theorem~\ref{thm:gap}, which indicates that under proper choice of $t$, the performance loss caused by switching can decay super-exponentially $M$ is enlarged.

\begin{corollary}
    When $K_1$ is held constant, and $M, t$ are varied, we have
    \begin{equation}
        J^{K_1,M,t} - J^{K_1} = O(t^{1/4}\exp(-cM^2))
        \label{eq:gap_bigO}
    \end{equation}
    as $M \rightarrow \infty, t \rightarrow \infty, t^{1/4} \exp \left(-c M^{2}\right) \rightarrow 0$, where $c = (1 - \rho^{1/4})^2 / (16 \| \tilde{W} \| \| P \| \| P^{-1} \|)$ is a system-dependent constant.
\end{corollary}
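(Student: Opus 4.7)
The plan is to derive the corollary as a direct consequence of Theorem~\ref{thm:gap} by tracking how each constant in the bound depends on $M$ and $t$ under the stated asymptotic regime. First I would observe that since $K_1$ is held constant, all of $\rho, P, \tilde{W}, P_0, Q_1, A_1, \Delta_1, \Delta_2$ (and therefore the constants $C_1, C_2, C_3, C_4$ and the prefactor $2^{n/2+1}/(\rho^{-1/2}-1)$ appearing in $\mathcal{E}(M)$) do not depend on $M$ or $t$. In particular, for $M$ sufficiently large the condition $M \geq M_0$ required by Theorem~\ref{thm:gap} is eventually satisfied.

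Next I would substitute the explicit formula for $\mathcal{E}(M)$ into $\mathcal{G} = C_4(t \mathcal{E}(M))^{1/4}$. Taking the fourth root of the exponential factor yields
\begin{equation*}
\mathcal{G} = C_4 \left(\frac{2^{n/2+1}}{\rho^{-1/2}-1}\right)^{1/4} t^{1/4} \exp\!\left(-\frac{(1-\rho^{1/4})^2 M^2}{16\,\|\tilde{W}\|\,\|P\|\,\|P^{-1}\|}\right),
\end{equation*}
whose exponent is exactly $-cM^2$ with $c$ as in the statement. Since the leading constant is independent of $M$ and $t$, this gives $\mathcal{G} = O(t^{1/4}\exp(-cM^2))$.

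Finally I would insert this into the conclusion of Theorem~\ref{thm:gap},
\begin{equation*}
J^{K_1,M,t} - J^{K_1} \leq 2 C_1 C_2 \mathcal{G} + (C_2^2 + C_3)\mathcal{G}^2,
\end{equation*}
and argue that under the hypothesis $t^{1/4}\exp(-cM^2) \to 0$ we have $\mathcal{G}\to 0$, so the quadratic term $\mathcal{G}^2$ is dominated by the linear term $\mathcal{G}$. Hence the whole right-hand side is $O(\mathcal{G}) = O(t^{1/4}\exp(-cM^2))$, which is precisely the claim.

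There is essentially no hard step here: the substantive work has already been done in Theorems~\ref{thm:moment_prob} and~\ref{thm:gap}, and this corollary is a bookkeeping exercise that isolates the dependence on $M$ and $t$ from the constants inherited from the fixed $K_1$. The only subtleties are verifying that $M_0$ is fixed (so the hypothesis $M \geq M_0$ becomes automatic as $M\to\infty$) and confirming that the exponent in the fourth root of $\mathcal{E}(M)$ matches the displayed constant $c$.
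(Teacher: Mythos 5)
Your proposal is correct and matches the paper's (implicit) argument: the paper presents this corollary as following directly from Theorem~\ref{thm:gap}, and your bookkeeping — substituting $\mathcal{E}(M)$ into $\mathcal{G}$, verifying the factor of $16$ in $c$ from the fourth root, and noting that $\mathcal{G}^2$ is dominated by $\mathcal{G}$ once $\mathcal{G}\to 0$ — is exactly the intended derivation. The only point worth adding is that, symmetrically to $M\geq M_0$, the requirement that $t$ satisfy the second inequality of~\eqref{eq:common_lyap} also becomes automatic as $t\to\infty$ since $(A+BK_0)^t\to 0$.
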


\subsection{Super-polynomial decay of performance loss with varying $K_1,M,t$}

This subsection builds upon Theorem~\ref{thm:gap} to prove shown the performance loss $J^{K_1,M,t} - J^{K_1}$ decays faster than any polynomial when $K_1$ converges to the optimal gain $K^*$ and both $M,t$ grow logarithmically, which provides a practical schedule for choosing $M$ and $t$ without relying on any unknown system-dependent constants.

Let $K^*$ be the optimal gain defined in~\eqref{eq:K_opt}, and let the switching strategy vary with the time index $k$. Assume the primary gain $K_{1,k}$ used at step $k$ satisfy $\lim_{k \to \infty} K_{1,k} = K^*$, as is often the case in learning-based schemes, then it follows that there exists $P \succ 0, 0 < \rho < 1, t\in \mathbb{N}$ such that~\eqref{eq:common_lyap} holds \emph{uniformly} for all sufficiently large $k$. 
When choosing $M_k = \Theta(\log(k)), t_k = \Theta(\log(k))$, the assumptions of Theorem~\ref{thm:gap} the $M,t$ are sufficiently large hold automatically when $k$ is large enough. By invoking Theorem~\ref{thm:gap} and substituting $M_k = \Theta(\log(k)), t_k = \Theta(\log(k))$ into~\eqref{eq:gap_bigO},
it holds
\begin{equation*}
    J^{K_{1,k}, M_k, t_k} - J^{K_{1,k}} = O(k^{-\alpha}),
\end{equation*}
for any $\alpha > 0$. The above described logarithmic schedule for choosing $M,t$ and the corresponding super-polynomial decay rate of the extra cost is validated by a simulation in the next section.
\section{NUMERICAL SIMULATION}\label{sec:simulation}

This section demonstrates the effectiveness and near-optimality of the proposed switching scheme using an industrial process example.

\begin{example}
    Tennessee Eastman Process (TEP) is a commonly used process control system proposed by~\cite{downs1993plant}. In this simulation, we consider a simplified version of TEP similar to the one in~\cite{liu2020online} with full-state-feedback. The system has state dimension $n = 8$ and input dimension $m = 4$. The system is open-loop stable, and therefore the fallback controller is chosen as $K_0 = 0$. The LQ weight matrices and the covariance of process noise are chosen as $Q=I,R=I,W=I$. A time-varying switched control strategy is deployed, whose hyper-parameters vary as $M_k = \log(k+1), t_k = \lfloor \log(k+1) \rfloor$, and whose underlying primary gain $K_{1,k}$ is updated at steps $k = 2^i(i\in \mathbb{N})$ according to $\hat{A}, \hat{B}$ learned by the least-squares algorithm similar to~\cite{wang2021exact}. At each step $k$, we compare the LQ cost $J^{K_{1,k}, M_k, t_k}$ associated with our proposed switched controller against the LQ cost $J^{K_{1,k}}$ associated with the linear controller of the underlying primary gain. Since we do not know of any method to evaluate $J^{K_{1,k}, M_k, t_k}$ analytically, it is approximated by random sampling of $1000$ trajectories, each with length $100$, at each $k$.

    \begin{figure}[!htbp]
        \centering
        \includegraphics{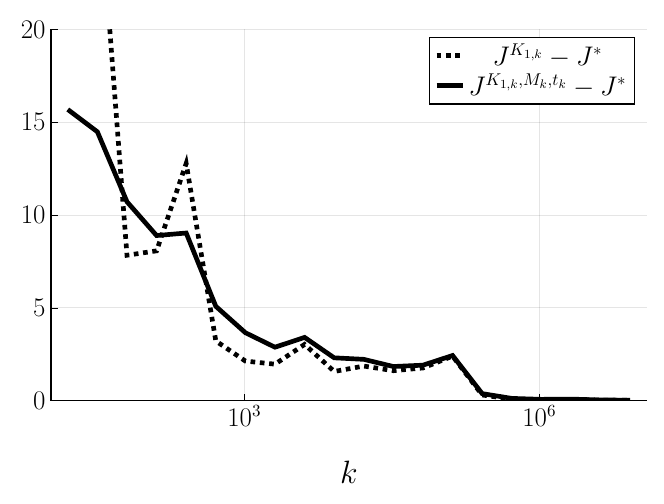}
        \caption{Simulation results on TEP. The proposed switching guarantees finite cost even if the underlying primary gain is destabilizing, and the extra cost incurred converges to zero.}
        \label{fig:compare}
    \end{figure}

    The simulation results are shown in Fig.~\ref{fig:compare}. It can be observed that the switching strategy guarantees finite LQ cost even when the primary gain $K_1$ is destabilizing, as is the case in the initial stage. As $K_1$ gradually improves, the switching strategy begins to incur an extra cost, but the extra cost converges to zero as $k$ increases. The the convergence rate of the extra cost is visualized using a log-log plot in Fig.~\ref{fig:diff}, where super-polynomial convergence can be observed.

\begin{figure}[!htbp]
    \centering
    \includegraphics{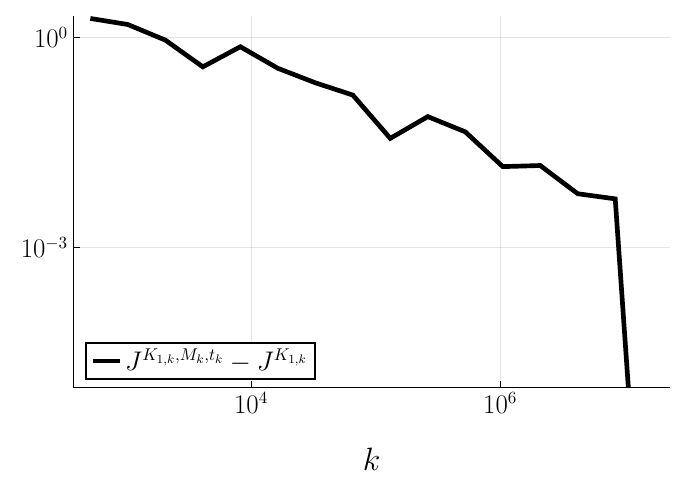}
    \caption{Super-polynomial decay of the extra cost}
    \label{fig:diff}
\end{figure}

\end{example}
\section{CONCLUSION}\label{sec:conclusion}

This paper introduces a plug-and-play switching strategy which enhances the safety of uncertified linear state-feedback controllers, and quantify the extra cost caused by such switching strategy. Practical schedules for varying the strategy hyper-parameters over time are also designed, such that super-polynomial decay of the extra cost can be achieved. Future directions include extending the switching strategy with near-optimality guarantee to more general classes of systems.

\bibliographystyle{IEEEtran}
\bibliography{ref.bib}

\end{document}